
\documentclass[journal,a4paper]{IEEEtran}
%


%

\usepackage{amsthm}
\usepackage{graphicx}
\usepackage{subcaption}

\binoppenalty=10000
\relpenalty=10000

%
\usepackage{cite}
\usepackage{xcolor}
\usepackage{amsmath}
\usepackage{amssymb}
\interdisplaylinepenalty=2500
\usepackage{mathtools}

%

%
\usepackage{algorithm}
\usepackage{algorithmic}
\makeatletter
\def\BState{\State\hskip-\ALG@thistlm}
\makeatother
\usepackage{fixltx2e}

\usepackage{stfloats}
\pagenumbering{gobble}
\usepackage{lipsum}
\usepackage[margin=0.75in]{geometry}
\newgeometry{top=122pt, left=54pt, right=37pt, bottom=54pt}

\hyphenation{op-tical net-works semi-conduc-tor}

\begin{document}
\setlength{\abovecaptionskip}{1pt}
\setlength{\belowcaptionskip}{-15pt}
\setlength{\abovedisplayskip}{0.7pt}
\setlength{\belowdisplayskip}{0.7pt}
\setlength{\abovedisplayshortskip}{0.5pt}
\setlength{\belowdisplayshortskip}{0.5pt}
\setlength{\parskip}{0.5pt}
\setlength{\textfloatsep}{0.9pt}
\setlength{\floatsep}{.5pt}
%
\title{Smart Charging Benefits in Autonomous Mobility on Demand Systems}

\author{\IEEEauthorblockN{Berkay Turan \quad}
 \and
\IEEEauthorblockN{Nathaniel Tucker \quad}
\and
\IEEEauthorblockN{Mahnoosh Alizadeh}
\vspace*{-0.85cm}
}


%


\maketitle

\begin{abstract}
In this paper, we study the potential benefits from smart charging for a fleet of electric vehicles (EVs) providing autonomous mobility-on-demand (AMoD) services. We first consider a profit-maximizing platform operator who makes decisions for routing, charging, rebalancing, and pricing for rides based on a  network flow model. Clearly, each of these decisions directly influence the fleet's smart charging potential; however, it is not possible to directly characterize the effects of various system parameters on smart charging under a classical network flow model. As such, we propose a  modeling variation that allows us to decouple the charging and routing problems faced by the operator. This variation allows us to provide closed-form mathematical expressions relating the charging costs to the maximum battery capacity of the vehicles as well as the fleet operational costs. We show that investing in larger battery capacities and operating more vehicles for rebalancing reduces the charging costs, while increasing the fleet operational costs. Hence, we study the trade-off  the  operator faces, analyze the minimum cost fleet charging strategy, and provide numerical results illustrating the smart charging benefits to the  operator.
\end{abstract}

%
\IEEEpeerreviewmaketitle
\newtheorem{proposition}{Proposition}
\newtheorem{corollary}{Corollary}[proposition]
\newtheorem{theorem}{Theorem}
\newtheorem{lemma}{Lemma}
\makeatletter
\def\blfootnote{\xdef\@thefnmark{}\@footnotetext}
\makeatother

\blfootnote{
\hspace{-12pt}
This work was supported by the NSF Grants 1837125, 1847096.\\
B. Turan, N. Tucker, and M. Alizadeh are with the Dept. Electrical and Computer Engineering, University of California, Santa Barbara.}

\section{Introduction}
The increasing popularity of mobility-on-demand platforms, the rapid developments in autonomous driving technology, and the increasing adoption rate of EVs are disruptive technologies that are extensively altering society's perspective of urban mobility.  Given this, the vision of an electric and autonomous mobility-on-demand (AMoD) fleet serving urban customers' mobility needs is gaining traction within the transportation industry, with multiple companies now heavily investing in AMoD technology \cite{companies}. 

In conjunction with society's interest in AMoD technologies, there is extensive literature emerging that studies the different aspects of AMoD systems. The potential impact of shared mobility services on daily urban mobility \cite{simulation5}, the analysis of AMoD systems with realistic demand  \cite{simulation4}, autonomous vehicle behavior in existing traffic models \cite{simulation2}, and rebalancing algorithms \cite{simulation1} have been investigated using simulation frameworks. The interplay between AMoD and public transport has been studied in \cite{simulation6} and \cite{publictransit1}. On the modeling side, queueing theoretical models capture the stochasticity of the customers  \cite{queuetheoretical2}, while network flow models efficiently optimize the fleet control in a static setting \cite{networkflow1}. Owing to their simplicity, network flow based formulations are commonly used for algorithmic control of routing and rebalancing  in a receding-horizon fashion \cite{recedinghorizon1}, and to control congestion effects \cite{congestion1}.

In addition to AMoD technology, the transportation sector is looking to increase utilization of electric vehicles (EVs) for consumers, companies, and fleet operations. Specifically, EV employment in mobility-on-demand (MoD) systems such as taxi environments has been studied in \cite{evtaxi1} and \cite{evtaxi2}. The authors of \cite{scheduling1} study scheduling algorithms for assigning MoD EVs to trips. To address the issue of EVs' need to perform in-route charging, \cite{inroutecharging} proposes a routing scheme that aims to reduce overall delays. As EVs could also be autonomous, the authors of \cite{agentbased} study an agent-based model to simulate the operations of an AMoD fleet of EVs under various vehicle and infrastructure scenarios. Paper \cite{nate} proposes an online charge scheduling algorithm for EVs providing AMoD services. Possible issues such as communication delays and instability during charging process arising from occupying EVs in an AMoD system are investigated in \cite{fogbased}. As these fleets can be used to assist other means of transportation, \cite{trainconnection} discusses the potential of using AMoD as a last mile connection of train trips. 
Additionally, the authors of \cite{powergrid1} analyze the interaction between EV AMoD systems and power grid due to the charging requirements of EVs.

In this paper, our main goal is to quantify the decreased charging costs from utilizing smart charging as an EV AMoD fleet transports customers between various locations. Specifically, smart charging refers to the practice of charging EVs opportunistically at times and locations at which electricity is inexpensive and the power grid is under less stress. We adopt a model that considers a profit-maximizing AMoD platform that is transporting customers over a static  and simplified network. We assume that the platform operator optimizes vehicle charging and rebalancing decisions, as well as customer payments for rides, by  considering the diversity of electricity prices at different network nodes. Exploiting the diversity in electricity prices leads to an EV smart charging plan in the context of AMoD systems. Moreover, our goal is to study the effects of various system parameters on the cost savings that smart charging can provide.  

We adopt an abstract network flow-based formulation and focus on the impacts of two critical factors on the charging costs: (i) the EV battery capacity; (ii) the per-vehicle operational costs (and implicitly, the fleet size). Accordingly, we analyze optimal routing, pricing, charging, and rebalancing strategies for the flow-based formulation. Then, in order to quantify the importance of battery capacity and  operational costs on the smart charging potential of AMoD fleets, we adopt a  modeling variation  that decouples charging decisions from routing decisions, hence allowing us to obtain closed-form expressions for the trade-offs that are of interest to us. While it is evident that the closed-form relationships are only valid for our abstract model,  they highlight important design choices for any AMoD system utilizing EVs.


\textit{Organization}: The remainder of the paper is structured as follows: Section \ref{systemmodel} presents the system model and describes the platform operator's optimization problem. In Section \ref{netflow}, we formulate the optimization problem using a network flow approach and discuss the effects of fleet operational and charging costs on profits. Section \ref{randomprices} proposes a modeling variation in order to mathematically characterize smart charging benefits. Section \ref{numerical} presents numerical results quantifying the smart charging benefits.
\section{System Model}\label{systemmodel}

{\it Network and Demand Models:}
We consider a fleet of AMoD EVs operating within a transportation network that is a fully connected graph consisting of $\mathcal M = \{1,\ldots,m\}$ equidistant nodes that can each serve as a trip origin or destination\footnote{\label{note1}Most of our results can be extended to the more general case with nodes being geographically distributed on a network and hence different trips taking different amounts of energy and time. For brevity of notation we use equidistant nodes in this paper.}. We adopt the static model studied in \cite{ridesharing} for the customers' transportation demand. We assume that potential customers arrive at node $i$ at a  rate  of $\theta_i$  per period. The routing matrix $A = [\alpha_{ij}]_{i,j \in \mathcal M}$ defines the fractions $\alpha_{ij}$ of riders at node $i$ who wish to go to node $j$, with $\alpha_{ii}=0$, $\alpha_{ij}\geq 0$, and $\sum_{j\in \mathcal M}\alpha_{ij}=1$. Moreover, we assume that these riders are heterogeneous in terms of their willingness to pay. In particular, if the price for receiving a ride from node $i$   is set to $\ell_i$, the induced demand for rides from $i$ to $j$ at each time period is given by $\Lambda_{ij}=\theta_i\alpha_{ij}(1-F(\ell_i))$, where $F(\cdot)$ is the cumulative distribution of riders' willingness to pay with a support of $[0,\ell_{\max}]$. We note that the price of rides is only dependent on their origin; however, an extension to origin-destination (O-D) based prices is straightforward.

{\it Vehicle Model:} 
To capture the effect of trip demand and the associated charging, routing, and rebalancing decisions  on the fleet size, we assume that each autonomous vehicle in the fleet has a per period operational cost of $\beta$. As such, we make no explicit assumption on fleet size; rather, our cost model implicitly optimizes the fleet size given the system parameters. Furthermore, as the vehicles are electric, they have to sustain charge in order to operate. We assume there is a charging station placed at each node $m\in\mathcal M$. To charge at node $i$, the operator pays a price of electricity $p_i$ per unit of energy.  
We assume that all EVs in the fleet have a battery capacity denoted as $v_{\max}\in \mathbb Z^+$; therefore, each EV has a discrete battery energy level $v \in \mathcal V$, where $\mathcal V = \{v\in \mathbb{N}| 0\leq v \leq v_{\max}\}$. In our discrete-time model, we assume each vehicle takes one period to charge one unit of energy. Given that all O-D pairs are considered to be equidistant, we assume each trip takes $\tau$ periods of time to complete and consumes one unit of energy\footnotemark[1]. 

{\it Rebalancing:} 
In addition to routing and charging the vehicles, the fleet operator can also utilize vehicles for rebalancing. Specifically, these are vehicles that are traveling between different nodes in the network without carrying passengers. Rebalancing vehicles are required for the platform to serve the induced outgoing demand at a node if said demand exceeds the incoming trip demand with that node as the destination. Moreover, as we emphasize in the following sections, rebalancing trips can also be useful for lowering the platform's charging costs. Thus, in our model, even with a completely balanced trip pattern (i.e., the induced demand being equal to the incoming demand at each node), rebalancing vehicles may still be employed by the operator.

{\it Platform Operator's Problem:} We consider a profit-maximizing AMoD operator that manages a  fleet of EVs that make trips to provide transportation services to customers. The operator's goal is to maximize profits by 1) setting prices for rides and hence managing customer demand at each node; 2) optimally operating the AMoD fleet (i.e., charging, routing, and rebalancing) to minimize operational and charging costs.

\section{Network Flow Formulation and Marginal Prices}\label{netflow}
\subsection{Network Flow Model}

In this section, we approach the platform's optimization problem via a network flow model. 
Specifically, let $\ell_i$ be the price for rides originating from node $i$, $x_i^v$   the number of vehicles with battery energy level $v$ charging at node $i$, $x_{ij}^v$  the number of vehicles starting with a battery energy level $v$ and transporting a passenger from node $i$ to $j$, and $r_{ij}^v$  the number of rebalancing vehicles starting with a battery energy level $v$ and making a trip from node $i$ to $j$. The platform operator aims to set $\ell_i$, $x_i^v$, $x_{ij}^v$, and $r_{ij}^v$ in order to maximize profits $P$.
Namely, the operator's problem can be stated as:
\begin{equation}
\label{eq:flowoptimization}
\begin{aligned}
&\underset{x_i^v,x_{ij}^v,r^v_{ij},\ell_i}{\text{max}}
& &\sum_{i=1}^m\ell_i\theta_i(1-F(\ell_i))-\sum_{i=1}^m\sum_{v=0}^{v_{\max}-1} (\beta+p_i) x_i^v\\& & &-\tau\beta \sum_{i=1}^m\sum_{j=1}^m\sum_{v=1}^{v_{\max}} (r_{ij}^v+x_{ij}^v) \\
& \text{subject to}
& & \sum_{v=1}^{v_{\max}}x_{ij}^v = \theta_i(1-F(\ell_i))\alpha_{ij}\quad\forall i,j \in \mathcal M,\\
& & & x_i^v+\sum_{j=1}^m(x_{ij}^v+r_{ij}^v)=\\& & &x_i^{v-1}+\sum_{j=1}^m(x_{ji}^{v+1}+r_{ji}^{v+1})\quad\forall i\in\mathcal M,\;\forall v\in\mathcal V,\\
& & & x_i^v\geq 0, \; x_{ij}^v\geq 0,\; r_{ij}^v\geq 0\; ~\forall i,j\in \mathcal M,\;\forall v\in\mathcal V.
\end{aligned}
\end{equation}
The first term in the objective function in \eqref{eq:flowoptimization} accounts for the aggregate revenue the platform generates by providing rides for $\theta_i(1-F(\ell_i))$ number of riders with a price of $\ell_i$. The second term is the operational and charging costs incurred by the charging vehicles, and the last term is the operational costs of the trip-making vehicles (including rebalancing trips). The first constraint requires the platform to serve all the induced demand between any two nodes $i$ and $j$. We will refer to this as a the {\it demand satisfaction constraint}. The second constraint is the  flow balance constraint for each node and each battery energy level  (For brevity, we have specified the constraints for all $v \in\mathcal V$. The variables with superscripts outside the set are equal to zero).

The optimization problem in \eqref{eq:flowoptimization} is non-convex for a general $F(\cdot)$. Nonetheless, when the platform's profits are affine in the induced demand $\theta_i(1-F(\cdot))$, it can be rewritten as a convex optimization problem. Hence, we assume that the rider's willingness to pay is uniformly distributed in $[0,\ell_{\max}]$, i.e., $F(\ell_i)=\frac{\ell_i}{\ell_{\max}}$.

\subsection{Marginal Pricing}
The optimal prices $\ell_i^*$ are related to the operational and charging costs associated with making a trip out of node $i$. The next proposition highlights this relationship.

\begin{proposition}\label{prop:networkflow}
Let $\lambda_{ij}^*$ be optimal the dual variable corresponding to the demand satisfaction constraint for trips originating at node $i$ and ending in node $j$. The optimal prices $\ell_i^*$ for rides originating at node $i$ are:
\begin{equation}
\label{eq:optimalprices}
    \ell_i^*=\frac{\ell_{\max}+\sum_{j=1}^m\lambda_{ij}^*\alpha_{ij}}{2}.
\end{equation}
These prices can be upper bounded by:
\begin{equation}\label{eq:bounprices}
     \ell_i^*\leq\frac{\ell_{\max}+p_i+\sum_{j=1}^m\alpha_{ij}p_j+(2+2\tau)\beta}{2}.
\end{equation}
Moreover, with these optimal prices $\ell_i^*$, the profits generated per period is:
\begin{equation}
    \label{eq:profits}
    P=\sum_{i=1}^m\frac{\theta_i}{\ell_{\max}}(\ell_{\max}-\ell_i^*)^2.
\end{equation}
\end{proposition}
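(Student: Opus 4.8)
\emph{Proof strategy.} Since the willingness to pay is uniform, $1-F(\ell_i)=\frac{\ell_{\max}-\ell_i}{\ell_{\max}}$, so the revenue is concave in each $\ell_i$ and every constraint in \eqref{eq:flowoptimization} is affine; hence the program is convex and the KKT conditions are both necessary and sufficient for optimality. The plan is to form the Lagrangian, assigning the multiplier $\lambda_{ij}$ to each demand satisfaction constraint and a multiplier $\mu_i^v$ to each flow balance constraint, and to extract all three claims from stationarity and complementary slackness. For \eqref{eq:optimalprices}, I would differentiate the Lagrangian with respect to $\ell_i$; because $\ell_i$ enters only through the revenue term and through the right-hand side $\theta_i(1-F(\ell_i))\alpha_{ij}$ of the demand constraints, stationarity collapses to the scalar equation $\ell_{\max}-2\ell_i+\sum_{j}\lambda_{ij}\alpha_{ij}=0$, which rearranges directly to \eqref{eq:optimalprices}.

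For the upper bound \eqref{eq:bounprices} the plan is to bound each $\lambda_{ij}^*$ by chaining the dual-feasibility (nonpositive reduced cost) inequalities coming from stationarity in the flow variables. Writing out the reduced costs of $x_{ij}^v$, $x_i^v$, and $r_{ij}^v$ gives, for every admissible battery level $v$, the three inequalities $\lambda_{ij}\leq\tau\beta+\mu_i^v-\mu_j^{v-1}$, $\mu_i^{v+1}-\mu_i^v\leq\beta+p_i$, and $\mu_j^{v-1}-\mu_i^v\leq\tau\beta$. These correspond to a physically feasible cycle---serve the trip $i\to j$, charge one unit at $j$, rebalance $j\to i$, and charge one unit at $i$---and telescoping them at a single level (for instance $v=1$, valid whenever $v_{\max}\geq1$) yields $\mu_i^v-\mu_j^{v-1}\leq(2+\tau)\beta+p_i+p_j$, hence $\lambda_{ij}^*\leq(2+2\tau)\beta+p_i+p_j$. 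Substituting into \eqref{eq:optimalprices} and using $\sum_j\alpha_{ij}=1$ then gives \eqref{eq:bounprices}. I expect this step to be the main obstacle, since it requires selecting the right cycle and tracking the battery-level indices carefully so that every reduced-cost inequality invoked is legitimate.

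For the profit identity \eqref{eq:profits}, the plan is to first show that at optimality the total operational and charging cost equals $\sum_{i,j}\lambda_{ij}^* D_{ij}^*$, where $D_{ij}^*=\theta_i(1-F(\ell_i^*))\alpha_{ij}$. By complementary slackness each cost coefficient can be replaced on the active variables by the matching combination of duals, namely $(\beta+p_i)x_i^v=(\mu_i^{v+1}-\mu_i^v)x_i^v$, $\tau\beta x_{ij}^v=(\lambda_{ij}-\mu_i^v+\mu_j^{v-1})x_{ij}^v$, and $\tau\beta r_{ij}^v=(\mu_j^{v-1}-\mu_i^v)r_{ij}^v$. Summing these and collecting the coefficient of each $\mu_i^v$ reproduces precisely the net flow out of state $(i,v)$, which vanishes by the flow balance constraint, so the only surviving contribution is $\sum_{i,j}\lambda_{ij}^*\sum_v x_{ij}^v=\sum_{i,j}\lambda_{ij}^* D_{ij}^*$. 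Finally, I would write $P=\text{revenue}-\text{cost}$, substitute $D_{ij}^*=\theta_i\frac{\ell_{\max}-\ell_i^*}{\ell_{\max}}\alpha_{ij}$ together with $\sum_j\lambda_{ij}^*\alpha_{ij}=2\ell_i^*-\ell_{\max}$ from \eqref{eq:optimalprices}, and use $\ell_i^*-(2\ell_i^*-\ell_{\max})=\ell_{\max}-\ell_i^*$ to collapse the expression into $\sum_i\frac{\theta_i}{\ell_{\max}}(\ell_{\max}-\ell_i^*)^2$, establishing \eqref{eq:profits}.
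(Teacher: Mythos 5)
Your proposal is correct and follows essentially the same route as the paper: convexity plus strong duality/KKT, stationarity in $\ell_i$ for \eqref{eq:optimalprices}, chaining the dual-feasibility (reduced-cost) inequalities along the cycle trip--charge--rebalance--charge for \eqref{eq:bounprices}, and the zero duality gap (equivalently, your complementary-slackness identity $\text{cost}=\sum_{i,j}\lambda_{ij}^* D_{ij}^*$) for \eqref{eq:profits}. In fact you supply the details the paper leaves as ``straightforward algebraic calculations,'' and your telescoping at level $v=1$ with the bound $\lambda_{ij}^*\leq (2+2\tau)\beta+p_i+p_j$ is exactly the intended argument.
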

The dual variables $\lambda_{ij}^*$, could be interpreted as the cost of providing a single ride between $i$ and $j$ to the platform. In the worst case scenario, every single requested ride from node $i$ requires rebalancing and charging both at the origin and the destination. Hence the upper bound  \eqref{eq:bounprices} includes the price of electricity at the trip origin (to charge the rebalancing vehicle), the average price of electricity at the destination and the operational cost of 4 vehicles, 2 of which are used for trips and 2 for charging.

\subsection{Smart Charging Benefits}
The cost $\lambda_{ij}^*$ of providing a single ride between nodes $i,j \in \mathcal M$ is fundamental to the operations of the AMoD system. Consider the results presented in Proposition \ref{prop:networkflow}.
We can observe that the platform profit $P$ is lowered as the  additional cost term $\sum_{j=1}^m\lambda_{ij}^*\alpha_{ij}$ in \eqref{eq:optimalprices} increases.   This additional term $\sum_{j=1}^m\lambda_{ij}^*\alpha_{ij}$, which is the average marginal cost of a single ride out of node $i$,  could  be  interpreted  as taxes applied on products, which is shared among the supplier and  the  consumer  in  a  basic  supply-demand  setting.  In  the AMoD system, the cost is shared equally among the platform operator and  the  riders, which results in a decrease in both profits and consumer surplus. To decrease this loss, the platform operator acts in order to decrease the total cost of operation (i.e., charging and fleet operational costs) via smart charging and routing strategies. Our goal is to specifically study how   smart charging strategies can aid the operator in decreasing the costs of rides. The potential of smart charging strategies for reducing costs clearly depends on the battery capacity $v_{\max}$ and the operational cost parameter $\beta$.

Let us elaborate further. A smart charging strategy allows the vehicles to avoid charging at expensive nodes and charge as much as they can once they arrive at a cheap node. The lower the battery capacity $v_{\max}$ is, the less likely it is for a vehicle to visit nodes with cheaper electricity prices before running out of charge. Alternatively, a large enough battery capacity $v_{\max}$ allows the operator to solely charge the vehicles at cheap nodes, resulting in a low electricity cost. In a similar manner, a rebalancing trip to a cheaper node (As mentioned in Section II, rebalancing can also be done solely for charging purposes.) could decrease the total   costs, even though it increases the fleet operational costs. As an example, consider the following setting: Let's assume that, unless rebalancing is allowed, the optimal strategy for a vehicle at node $i$ with $p_i=3$ is to charge for one unit of energy. There is another node $j$, with $p_j=0.5$. Let $\beta=0.02$ and $\tau=10$. Then, instead of paying $p_i+\beta=3.02$ to charge for a single unit of energy at node $i$, this vehicle should visit node $j$, charge for $3$ units and then come back to node $i$ for a total cost of $3\beta+2\tau\beta+3p_j=1.96$. Clearly, the profitability of such rebalancing trips depends on the value of $\beta$.

This network flow model accounts for all the phenomena   mentioned above. Yet it is not possible to explicitly characterize the benefits of employing smart charging  strategies alone on reducing the cost of rides between nodes $i$ and $j$. This is because we cannot explicitly state the relationship between the dual   multipliers $\lambda_{ij}^*$ of the optimization problem \eqref{eq:flowoptimization} with   the demand's willingness to pay characterized by $F(\cdot)$, the potential demand $\theta_i$,  the routing matrix $A$, the electricity prices, as well as our parameters of interest, $v_{\max}$ and $\beta$.
Hence, while we will numerically study this interconnection and its effects on the platform's profit in Section \ref{numerical},   we would like to propose a variation of the same  flow model that enables us to decouple the effects of the network parameters $\theta_i$, $F(\cdot)$ and $A$ from the optimal charging strategy and allows us to
provide explicit relationships between the cost-savings due to smart charging and our parameters of interest, namely the electricity price diversity in the network, battery capacity $v_{\max}$ and the fleet operational cost parameter $\beta$.
\section{Smart Charging Benefits with Random Prices}\label{randomprices}
 In this section, we propose a   variation of the network flow model that allows us to explicitly characterize the relationship between optimal charging cost incurred for each individual trip as a function of the vehicles' battery capacity $v_{\max}$ and the fleet operational cost parameter $\beta$. This can highlight an important planning trade-off that an AMoD operator faces: by investing in a larger fleet or in vehicles with larger battery capacities, the day-to-day costs of the operator can decrease.

 Specifically, from now on, we will assume that for the purposes of planning, the operator considers  the prices of electricity  that a vehicle can see at all nodes except the current node they are located at to be iid random variables sampled from a continuous distribution $f_P(p)$ with support   $[p_{\min},p_{\max}]$. However, the price of electricity at the current node will be considered known and constant for the duration of charge events once it is observed. This can be justified if the prices are strict sense stationary random processes with little correlation given time lags of order $\tau$. 
 Such random price models can have real-world applications given the introduction of high levels of renewable energy in the power grid, which makes electricity prices harder to forecast on a day by day basis.
 
 Hence, while we retain all the elements of our static flow model, we assume that the electricity prices at the destination nodes of all current trips is unknown to the operator at the time the optimization problem \eqref{eq:flowoptimization} is solved.   This assumption effectively  decouples the network operator's decision problem into two independent components: 
 \begin{enumerate}
     \item that of deciding whether to charge a vehicle with battery energy level $v$ if it is currently located at a node with electricity price $p$. By solving for the optimal charging strategy in this stochastic setting, we can characterize  the average charging cost $p_{avg}$ that must be paid for each trip by each vehicle; 
     \item that of deciding the optimal price to charge for rides at each node $i$ and the rebalancing trips performed in a non-electric AMoD system to ensure network balance between supply and demand at each node. This is essentially equivalent to solving problem \eqref{eq:flowoptimization} with all $p_i$'s set to a constant $p_{avg}$ given by the first problem.
 \end{enumerate}

 This form of stochasticity  in the prices  mean that   the vehicles' charging strategy would now solely depend on their current state of charge (SoC) and current price tuple $(v,p)$ (as opposed to the network flow model). This is a natural consequence of the fact that charging decisions are entirely independent of where the vehicle will be sent to next,   since the prices of electricity at all possible destinations the vehicle might be routed to is unknown to the operator. As a result, the solution of the first problem yields an average charging cost for every vehicle in the network.

\subsection{The Optimal Charging Strategy under Random Prices}
In this section, we develop an optimal charging strategy under the random price model. The decision of whether to charge or not solely depends on the vehicle's current SoC $v$ and the electricity price observed at the current node $p$. Hence, we define the optimal charging policy  $\mu$ as a collection of sets $\mathcal P_v,~v = 0, \ldots, v_{\max}-1$. The  prices $p \in \mathcal P_v$  are those at which it is optimal for a vehicle with battery energy level $v$ to charge for one unit.  If the price of the current node does not fall in  $\mathcal P_v$, the vehicle will not charge   and will instead travel to the next node (as long as $v \geq 1$).

Our goal is to determine the policy $\mu$ that minimizes average charging cost and subsequently, use this analysis to study the effect of the vehicles' battery capacity and the fleet operational cost parameter $\beta$ on the average charging cost.  
\begin{lemma}
Under the optimal policy, we have:\begin{equation}\mathcal P_v = \{p|  p \in [p_{\min},C_v(\mu)] \},\end{equation}
where $C_v(\mu)$ is the expected price of the next unit of energy under the policy $\mu$ if leaving the current node with a battery energy level $v$. 
\end{lemma}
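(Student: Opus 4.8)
The plan is to treat the charging decision as a one-step optimal-stopping problem for the acquisition of a single unit of energy, and to show that the cost-minimizing action reduces to a threshold test against $C_v(\mu)$. Fix a vehicle that has just arrived at a node with state of charge $v < v_{\max}$ and observes price $p$. There are only two relevant alternatives for where to obtain the \emph{next} unit of energy it will charge: buy it here and now, at cost exactly $p$; or decline, leave the node with energy level $v$, and buy the next unit at some later node. Because the vehicle traverses nodes anyway to serve trips, and the prices at all other nodes are iid draws from $f_P$ independent of the currently observed $p$, declining incurs no additional cost and, by the very definition of $C_v(\mu)$, makes the expected price of that next unit equal to $C_v(\mu)$. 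Hence the expected cost of the next unit starting from the state $(v,p)$ is $\min\{p,\,C_v(\mu)\}$, and this minimum is attained by charging precisely when $p\le C_v(\mu)$. This is exactly the asserted form $\mathcal P_v=\{p\mid p\in[p_{\min},C_v(\mu)]\}$.

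To keep this self-consistent I would exhibit the defining recursion for $C_v(\mu)$ and pin down the boundary. Conditioning on the price $p'$ seen at the next node, where the SoC is $v-1$, and substituting the threshold form just derived at that node gives
\[
C_v(\mu)=\mathbb{E}_{p'\sim f_P}\big[\min\{p',\,C_{v-1}(\mu)\}\big],
\]
which is solved bottom-up once the boundary is fixed: a vehicle arriving with $v=0$ cannot travel and is forced to charge, so $\mathcal P_0$ must be the entire support, consistent with $C_0(\mu)\ge p_{\max}$, while at $v=v_{\max}$ no charging is possible. A short induction on this recursion then shows $p_{\min}\le C_v(\mu)$ and $C_v(\mu)\le C_{v-1}(\mu)$ for every $v$, so each $\mathcal P_v$ is a genuine subinterval $[p_{\min},C_v(\mu)]$ of the support and the thresholds tighten as the battery fills, which is the monotonicity that makes multi-unit charging at a single node well behaved (the vehicle charges up to the first level whose threshold drops below $p$).

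The step I expect to require the most care is justifying that this myopic, per-unit comparison is optimal for the operator's true objective, the long-run average charging cost, rather than merely for the immediately next unit. The cleanest rigorous route is an exchange argument: starting from any optimal stationary policy, if it charged at a state $(v,p)$ with $p>C_v(\mu)$, one could defer that single purchase to the later node at which it would otherwise be made, strictly lowering the cost of that unit by $p-C_v(\mu)$ while leaving the statistics of all subsequent purchases unchanged, since the iid price structure renders the continuation after the deferred purchase stochastically identical; the symmetric argument rules out declining when $p<C_v(\mu)$. Alternatively one can invoke the average-cost Bellman optimality equation for the SoC-indexed Markov decision process and observe that the charge-versus-decline comparison is affine and increasing in $p$, forcing the optimal action to switch at a single threshold which consistency then identifies with $C_v(\mu)$. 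I would also check that tie-breaking at $p=C_v(\mu)$ is immaterial and that the capacity and non-negativity constraints $0\le v\le v_{\max}$ are precisely the boundary conditions incorporated above.
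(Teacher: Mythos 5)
Your proposal is correct and follows essentially the same route as the paper: the charging decision is reduced to comparing the observed price $p$ against the expected price $C_v(\mu)$ of the next unit if the vehicle departs, which is precisely the Bellman-equation argument the paper invokes, and your recursion $C_v(\mu)=\mathbb{E}[\min\{p',C_{v-1}(\mu)\}]$ together with the forced-charging boundary at $v=0$ (the paper's $C_0(\mu)=\infty$) reproduces the paper's equation for $C_v(\mu)$. Your added discussion of why the myopic per-unit comparison is optimal for the long-run average cost supplies detail the paper omits, but it is not a departure in approach.
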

The proof simply follows from the Bellman equation considering the fact that $C_0(\mu) = \infty$ under any policy $\mu$. Essentially, to make the charging decision, a comparison of the price of electricity at the current node and the expected price to be paid for the next unit of energy if the vehicle leaves the current node has to be made. If the current price is less than the expected price, the decision is to charge. Else, the vehicle does not charge and leaves the node. 
Hence,
\begin{multline}
\label{Cv}
        C_v(\mu)=P(p< C_{v-1}(\mu))\mathbb{E}[p|p< C_{v-1}(\mu)]\\+P(p\geq C_{v-1}(\mu)) C_{v-1}(\mu),\;~~~~~~\forall v\geq 1.
\end{multline}
The optimal policy results in a threshold price $C_v(\mu)$ for each battery energy level $v$, which is the maximum price the platform operator is willing to pay for one unit of energy for a vehicle with battery energy level $v$.

Let us denote the state of each EV using the tuple $(v,p)$.
Following the optimal policy, an EV with state $(v,p)$ takes an action to charge or travel, and transitions to a new state:
\begin{itemize}
    \item $(v+1,p)$ if charging.
    \item $(v-1,p')$ if traveling, with $p'$ sampled from $f_P(p)$.
\end{itemize}
Hence, the vehicle's charging decision process allows us to model the state of the vehicle as a Markov chain on the state space $(v,p)$, converging to a stationary distribution  $d(v,p)$\footnote{This Markov chain as we have defined it has a mix of continuous and discrete states, and it is straightforward to show that it satisfies the conditions for the existence of a unique stationary distribution. We remove the discussion for brevity and refer the reader to \cite{stationary}.  }. Note that the marginal distribution of prices observed under the stationary case is different from $f_P(p)$ as the vehicle might  charge for more than one unit  at a cheaper node and hence observing cheaper prices becomes more likely. Using the stationary distribution, the average charging cost per trip under the optimal policy $\mu$ can be written as:
\begin{equation}\label{pavg}
    p_{avg}(\mu) = \hspace{0pt}\sum_{v'=0}^{v_{\max}-1} \hspace{0pt}P_c(v = v') \mathbb{E}_{d(\cdot)}[p| p< C_{v'}(\mu), v=v'],
\end{equation}
where we define $P_c(v=v')$ as the probability that a charging vehicle has SoC $v'$. The average price paid for a charging vehicle with SoC $v'$ is calculated by the expected value of prices observed by the vehicle in the stationary distribution. Furthermore, we can explicitly write down the terms in \eqref{pavg}:
\begin{equation}\label{eq:pc}
\begin{aligned}
   P_c(v = v')&=\frac{\int_{p_{\min}}^{C_{v'}(\mu)}d(v',p)\;dp}{\sum_{v=0}^{v_{\max}-1}\int_{p_{\min}}^{C_{v}(\mu)}d(v,p)\;dp}\\&=\frac{\int_{p_{\min}}^{C_{v'}(\mu)}d(v',p)\;dp}{\frac{1}{1+\tau}},
\end{aligned}
\end{equation}
\begin{equation}\label{eq:expectedprice}
    \mathbb{E}_{d(\cdot)}[p| p< C_{v'}(\mu), v=v']=\frac{\int_{p_{\min}}^{C_{v'}(\mu)}p\;d(v',p)\;dp}{\int_{p_{\min}}^{C_{v'}(\mu)}d(v',p)\;dp}.
\end{equation}
The second equality in \eqref{eq:pc} follows from the fact that the trips take $\tau$ periods while charging takes one period. Hence, the probability that a vehicle is charging in the stationary distribution has to be $\frac{1}{1+\tau}$. Consequently, we get:
\begin{equation}
    \label{eq:avgcostrand}
    p_{avg}(\mu)=(1+\tau)\sum_{v'=0}^{v_{\max}-1}\int_{p_{\min}}^{C_{v'}(\mu)}p\; d(v',p)\; dp.
\end{equation}
For brevity of notation, we drop the dependence of the variables on $\mu$ from now on.
\subsection{Average Charging Cost}
In this section, we determine the average charging cost per vehicle $p_{avg}$ under the optimal charging strategy. In order to do this, first, we need to characterize the stationary distribution $d(v,p)$. At a given battery energy level $v$, $d(v,p)$ has to satisfy the following balance condition:
\begin{equation}
    \label{equilibrium}
    d(v,p)=d(v-1,p) u(C_{v-1}-p)+f_P(p)\int^{p_{\max}}_{C_{v+1}}d(v+1,p)\;dp,
\end{equation}
where $u(\cdot)$ is the unit-step function. The first term of summation corresponds to the vehicles that have made a charging decision at battery energy level $v-1$ and stayed at the same node, and the second term corresponds to the vehicles that have not charged at battery energy level $v+1$ and are randomly being distributed over prices after completing a trip.

In general, it is not possible to write down the average charging cost for any price distribution $f_P(p)$, because $d(v,p)$ can not be written in closed-form. To get closed-form results, we will make the following assumption:
\newtheorem{assumption}{Assumption}
\begin{assumption}
\label{ass:uniformprice}
The prices are uniformly distributed in $[p_{\min},p_{\max}]$. $f_P(p)=\frac{1}{p_{\min}-p_{\max}},\; p_{\min}\leq p\leq p_{\max}$.
\end{assumption}
In this case, $C_v$ is given by:
\begin{equation}
\label{Cvuniform}
    C_v=\eta-\frac{(p_{\max}-C_{v-1})^2}{2(p_{\max}-p_{\min})},\;\forall\;v\geq 2,
\end{equation}
with $C_1=\eta=\frac{p_{\min}+p_{\max}}{2}$. It is straightforward to go from \eqref{Cv} to \eqref{Cvuniform} through simple probabilistic calculations. When Assumption \ref{ass:uniformprice} holds, $d(v,p)$ becomes constant in the region of interest $[p_{\min},C_v]$. As a consequence, the integral in \eqref{eq:avgcostrand} can be calculated, and thus the average charging cost $p_{avg}$.
\begin{proposition}
\label{prop:chargecost}
When electricity prices follow Assumption \ref{ass:uniformprice},
\begin{equation}
    p_{avg}  = C_{v_{\max}}.
\end{equation}
\end{proposition}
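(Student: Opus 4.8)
The plan is to substitute the explicit uniform-price structure into the average-cost expression \eqref{eq:avgcostrand} and simplify. First I would pin down the shape of the stationary density $d(v,p)$ on the charging region. The key observation, which I would prove by induction on $v$ from the balance condition \eqref{equilibrium} under Assumption \ref{ass:uniformprice}, is that $d(v,p)$ is constant in $p$ on $[p_{\min},C_v]$. Indeed, on this interval $p\le C_v<C_{v-1}$ (the thresholds decrease in $v$, as is clear from \eqref{Cvuniform}), so the unit-step factor satisfies $u(C_{v-1}-p)=1$, while the second term of \eqref{equilibrium} is independent of $p$ because $f_P$ is constant; hence $d(v,\cdot)$ equals $d(v-1,\cdot)$ plus a constant on $[p_{\min},C_v]$ and inherits constancy from the level below, with the base case $d(0,p)=f_P(p)\int_{C_1}^{p_{\max}}d(1,p)\,dp$ already constant.

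Given this constancy, the conditional law of the price observed when a level-$v$ vehicle charges is uniform on $[p_{\min},C_v]$, so by \eqref{eq:expectedprice} its mean is $\tfrac{p_{\min}+C_v}{2}$ (with the $v=0$ term capped at the support, giving mean $\eta=C_1$ since $C_0=\infty$). Writing $U_v:=\int_{p_{\min}}^{C_v}d(v,p)\,dp$ for the charging mass at level $v$ and recalling $\sum_v U_v=\tfrac{1}{1+\tau}$, \eqref{eq:avgcostrand} collapses to the convex combination
\[
p_{avg}=\frac{\sum_{v=0}^{v_{\max}-1}U_v\,\tfrac{p_{\min}+C_v}{2}}{\sum_{v=0}^{v_{\max}-1}U_v},
\]
i.e. a weighted average of the per-level mean prices.

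It then remains to show this weighted average equals the single top threshold $C_{v_{\max}}$. The two ingredients I would combine are: (i) the steady-state level-crossing balance (the rate of charges out of level $v$ equals the rate of travels down from level $v+1$), which, together with the piecewise-constant form of $d$ from Step~1, determines the weights $U_v$ through the recursion obtained by evaluating \eqref{equilibrium}; and (ii) the threshold recursion in the equivalent form $C_{v+1}=\mathbb{E}[\min(p',C_v)]=P_v\,\tfrac{p_{\min}+C_v}{2}+(1-P_v)C_v$ with $P_v=\tfrac{C_v-p_{\min}}{p_{\max}-p_{\min}}$, which follows from \eqref{Cv} under Assumption \ref{ass:uniformprice} and rearranges to $\tfrac{p_{\min}+C_v}{2}=C_v+\tfrac{C_{v+1}-C_v}{P_v}$. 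Substituting this identity together with the relation $U_v/P_v=(p_{\max}-p_{\min})S_v$ (where $S_v$ is the constant value of $d$ on the charging region) turns the numerator above into $\sum_v U_v C_v+\sum_v(p_{\max}-p_{\min})S_v(C_{v+1}-C_v)$, which a summation by parts using the crossing-balance recursion and the terminal condition at the full level $v=v_{\max}$ collapses to $C_{v_{\max}}\sum_v U_v$.

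The main obstacle is precisely this last telescoping: it forces one to solve the coupled piecewise balance recursion for the stationary weights and reconcile it with the nonlinear threshold recursion \eqref{Cvuniform} so that all intermediate terms cancel and only $C_{v_{\max}}$ survives. Particular care is needed at the two boundaries — at $v=0$, where $C_0=\infty$ makes the charging region the whole support $[p_{\min},p_{\max}]$ with mean $\eta=C_1$, and at $v=v_{\max}$, where a full vehicle never charges so all its mass travels down, supplying the terminal condition that seeds the telescoping. As a sanity check I would first verify the claim for $v_{\max}=1$ (every unit is bought at a fresh uniform price, so $p_{avg}=\eta=C_1$) and for $v_{\max}=2$ before carrying out the general induction.
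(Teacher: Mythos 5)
Your proposal is correct and follows essentially the same route as the paper's proof: you establish that the stationary density is piecewise constant from the balance condition \eqref{equilibrium}, derive the level-to-level recursion for the charging-region densities (the paper's $d_v^1=\tfrac{p_{\max}-p_{\min}}{p_{\max}-C_v}d_{v-1}^1$, which is exactly your relation $U_v/P_v$ versus $S_v$ combined with the crossing balance), and then match the resulting weighted sum of per-level means against the threshold recursion \eqref{Cvuniform}. The only cosmetic difference is that you organize the final cancellation as a summation by parts, whereas the paper unrolls $C_{v_{\max}}$ explicitly down to $C_0$ and matches terms; both are the same telescoping, and your version does go through (including the boundary conventions $C_0=p_{\max}$ and the terminal condition at $v=v_{\max}$ that you flag).
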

According to the definition of $C_{v_{\max}}$, if we let a vehicle leave a node with energy level $v_{\max}$, it is going to pay an expected price of $C_{v_{\max}}$ the next time it charges. On the other hand, Proposition \ref{prop:chargecost} provides a stronger statement. If we let a vehicle with energy level $v_{\max}$ keep making trips and follow the optimal charging strategy, the average price paid for the electricity is still $C_{v_{\max}}$. As a result, it is rather straightforward to show: 1) The average charging cost $p_{avg}$ is a strictly decreasing function of $v_{\max}$; 2) As $v_{\max}$ goes to infinity, $p_{avg}$ goes to $p_{\min}$.
Seeing as $p_{avg}$ has these properties, the platform operator faces a trade-off between decreasing its charging costs by investing in vehicles with larger battery capacity or decreasing its investment and operational costs by operating vehicles with smaller batteries. This is the trade-off we study next.
\subsection{Trade-Off Between Operational and Charging Costs}
In this section, we propose an approach to choose the optimal battery capacity $v_{\max}$ and characterize $p_{avg}$ arising from this choice of battery capacity. To assign a cost to the choice of battery capacity, we make the following assumption:
\begin{assumption}
\label{batterycost}
The normalized (per period) cost of operating vehicles with battery capacity $v_{\max}$ is an affine function given by $\beta = \beta_0+ \xi v_{\max}$.
\end{assumption}
With this assumption, we are essentially breaking down the operational costs $\beta$   into two components: 1) $\beta_0$, a fixed cost to operate the vehicles which could represent mileage and maintenance costs,  and 2) $\xi v_{\max}$, the operational battery cost that will affect our choice of $v_{\max}$.
\begin{proposition}
\label{prop:batterycostfunction}
For  $\xi\leq\frac{p_{\max}-p_{\min}}{8}$, and given an optimal choice of battery capacity, $p_{avg}$ is:
\begin{equation}
\label{batterycostfunction}
\begin{aligned}
    &p_{avg}=\sqrt{2\xi(p_{\max}-p_{\min})}+p_{\min}. 
    \end{aligned}
\end{equation}
\end{proposition}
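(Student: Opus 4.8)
The plan is to reduce the operator's battery-sizing decision to a one-dimensional optimization in $v_{\max}$ and then exploit the recursion for the thresholds $C_v$ rather than trying to solve that recursion in closed form. The charging term $\sum_i\sum_v(\beta+p_i)x_i^v$ of the objective in \eqref{eq:flowoptimization} charges each charging vehicle $\beta+p_i$ per period, so under the random-price model the expected per-charge cost is $\beta+p_{avg}$. Using Proposition~\ref{prop:chargecost} ($p_{avg}=C_{v_{\max}}$) and Assumption~\ref{batterycost} ($\beta=\beta_0+\xi v_{\max}$), this cost is $\beta_0+\xi v_{\max}+C_{v_{\max}}$; dropping the constant $\beta_0$, the relevant trade-off the operator resolves is to minimize $g(v_{\max}):=C_{v_{\max}}+\xi v_{\max}$. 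Establishing this reduction, and in particular that the remaining $\tau$-dependent travel cost does not enter the marginal balance that sets $v_{\max}$ (so the final expression is free of $\tau$), is the first step.

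The key algebraic step is to extract the one-step decrement of the thresholds directly from \eqref{Cvuniform}. Writing $a=p_{\max}-p_{\min}$ and using $\eta=p_{\min}+a/2$ together with $p_{\max}-C_{v-1}=a-(C_{v-1}-p_{\min})$, a short computation collapses \eqref{Cvuniform} into the identity
\begin{equation}
C_{v-1}-C_v=\frac{(C_{v-1}-p_{\min})^2}{2(p_{\max}-p_{\min})}.
\end{equation}
This identity does the real work: it shows $C_v$ is strictly decreasing toward $p_{\min}$, and since $C_{v-1}>C_v>p_{\min}$ makes these decrements themselves decreasing in $v$, it shows $C_v$ is convex in $v$. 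Hence $g$ is convex, so a marginal (first-order) condition is both necessary and sufficient for optimality.

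Next I would solve the marginal condition. The operator increases $v_{\max}$ as long as the marginal saving $C_{v_{\max}}-C_{v_{\max}+1}$ exceeds the marginal battery cost $\xi$, so the optimum $v_{\max}^\ast$ is characterized by sandwiching $\xi$ between consecutive decrements; in the continuous (planning) relaxation this becomes $\frac{(C_{v_{\max}^\ast}-p_{\min})^2}{2(p_{\max}-p_{\min})}=\xi$. Solving for $C_{v_{\max}^\ast}$ and invoking Proposition~\ref{prop:chargecost} yields $p_{avg}=C_{v_{\max}^\ast}=p_{\min}+\sqrt{2\xi(p_{\max}-p_{\min})}$, which is the claim. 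Finally I would use the hypothesis to confirm this solution is valid: since $C_1=\eta$ is the largest attainable threshold and the first decrement is $C_1-C_2=\frac{(a/2)^2}{2a}=\frac{a}{8}$, the condition $\xi\le\frac{p_{\max}-p_{\min}}{8}$ is precisely what guarantees $v_{\max}^\ast\ge 1$ and places the target value inside the attainable range $[p_{\min},\eta]$ (with equality at the boundary $\xi=\frac{p_{\max}-p_{\min}}{8}$).

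The main obstacle is that the recursion \eqref{Cvuniform} admits no usable closed form for $C_v$ as a function of $v$, so the optimization cannot be carried out by differentiating an explicit $g(v_{\max})$. The resolution is to avoid solving the recursion and reason entirely through the decrement identity and convexity, which turns the minimization into the simple quadratic equation above. A secondary subtlety worth handling carefully is the discrete-versus-continuous gap: strictly, the integer optimum only gives $C_{v_{\max}^\ast}\le p_{\min}+\sqrt{2\xi(p_{\max}-p_{\min})}\le C_{v_{\max}^\ast-1}$, and the stated equality should be read as the continuous relaxation appropriate for a planning decision on battery size.
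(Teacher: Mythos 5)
Your proposal is correct and follows essentially the same route as the paper: both set the marginal decrement $C_{v_{\max}}-C_{v_{\max}+1}$ equal to $\xi$, use the recursion \eqref{Cvuniform} to rewrite that decrement as $\frac{(C_{v_{\max}}-p_{\min})^2}{2(p_{\max}-p_{\min})}$, and solve the resulting quadratic, with the bound $\xi\leq\frac{p_{\max}-p_{\min}}{8}$ coming from $C_{v_{\max}}\leq C_1=\eta$. Your additions — the explicit decrement identity, the convexity of $C_v$ in $v$ justifying that the first-order condition is sufficient, and the remark on the discrete-versus-continuous relaxation — fill in steps the paper asserts without proof but do not change the argument.
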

\begin{proof}
It is optimal to increase the battery capacity up to $v_{\max}$ such that $C_{v_{\max}}-C_{v_{\max}+1}=\xi$, because the marginal decrease in average charging cost resulting from increasing the battery capacity by one unit is canceled by the increased operational costs. Substituting $C_{v_{\max}+1}$ using \eqref{Cvuniform}:
\begin{equation*}
    \label{deltaC}
    \Delta C_{v_{\max}}=C_{v_{\max}}-C_{v_{\max}+1}=\frac{(C_{v_{\max}}-p_{\min})^2}{2(p_{\max}-p_{\min})}=\xi.
\end{equation*}
Rearranging the terms, we get \eqref{batterycostfunction}. Note that $v\geq1$, hence $C_{v_{\max}}\leq\frac{p_{\max}+p_{\min}}{2}$. This results in the constraint for $\xi$.
\end{proof}
The constraint on $\xi$ suggests that, if the cost of increasing the battery capacity for one unit is larger than $\frac{p_{\max}-p_{\min}}{8}$, then it is not beneficial to increase the battery capacity beyond $v_{\max}=1$, because the operational costs exceed the benefits. Note that Proposition \ref{prop:batterycostfunction} assumes that battery capacity units are small enough that we can always solve   $\Delta C_{v_{\max}}=\xi$. 

Proposition \ref{prop:batterycostfunction} gives us a direct relationship between battery cost and $p_{avg}$ under the random price model.  We can see that $p_{avg}$ is also dependent on $p_{\max}-p_{\min}$, which is a measure for the variance of the prices.
\begin{corollary}\label{prop:pricespread}
Let the mean of the prices be fixed at $\eta$. Then, as the standard deviation $\sigma$ of the prices increases, $p_{avg}$ decreases.
\end{corollary}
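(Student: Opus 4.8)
The plan is to reduce Corollary~\ref{prop:pricespread} to a one-variable calculus check by substituting the closed form from Proposition~\ref{prop:batterycostfunction} into an expression written purely in terms of $\sigma$, and then differentiating. First I would recall that for a uniform distribution on $[p_{\min},p_{\max}]$ the mean and standard deviation are $\eta=\frac{p_{\min}+p_{\max}}{2}$ and $\sigma=\frac{p_{\max}-p_{\min}}{2\sqrt{3}}$. Fixing the mean at $\eta$ and letting $\sigma$ vary therefore corresponds to writing $p_{\max}-p_{\min}=2\sqrt{3}\,\sigma$ and $p_{\min}=\eta-\sqrt{3}\,\sigma$, so that increasing $\sigma$ widens the support symmetrically about $\eta$.

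Substituting these into \eqref{batterycostfunction} expresses $p_{avg}$ as an explicit function of $\sigma$,
\begin{equation*}
p_{avg}(\sigma)=\sqrt{4\sqrt{3}\,\xi\,\sigma}+\eta-\sqrt{3}\,\sigma ,
\end{equation*}
and the next step is to differentiate with respect to $\sigma$, which gives
\begin{equation*}
\frac{dp_{avg}}{d\sigma}=\sqrt{\frac{\sqrt{3}\,\xi}{\sigma}}-\sqrt{3}.
\end{equation*}
The corollary then reduces to showing that this derivative is strictly negative on the admissible range of $\sigma$.

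The main obstacle, and the conceptual heart of the statement, is that two competing effects are present: widening the support lowers $p_{\min}$ (the $-\sqrt{3}\,\sigma$ term, which pushes $p_{avg}$ down), but it simultaneously increases the spread $p_{\max}-p_{\min}$ and hence the square-root term (which pushes $p_{avg}$ up, since a larger spread raises the threshold price paid per unit of charge). To determine which effect dominates, I would square the inequality $\sqrt{\sqrt{3}\,\xi/\sigma}<\sqrt{3}$, reducing it to the condition $\xi<\sqrt{3}\,\sigma$. Finally, I would invoke the feasibility constraint $\xi\le\frac{p_{\max}-p_{\min}}{8}=\frac{\sqrt{3}\,\sigma}{4}$ carried over from Proposition~\ref{prop:batterycostfunction}: since $\frac{\sqrt{3}\,\sigma}{4}<\sqrt{3}\,\sigma$, this constraint automatically guarantees $\xi<\sqrt{3}\,\sigma$, so the derivative is negative and $p_{avg}$ is decreasing in $\sigma$. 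The point worth emphasizing is that the very bound on $\xi$ that makes investing in larger batteries worthwhile is precisely what ensures the price-spread term cannot overwhelm the drop in $p_{\min}$.
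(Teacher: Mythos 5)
Your proposal is correct and follows the same route the paper sketches: rewrite \eqref{batterycostfunction} via $p_{\max}-p_{\min}=2\sqrt{3}\,\sigma$ and $p_{\min}=\eta-\sqrt{3}\,\sigma$, differentiate in $\sigma$, and use the bound $\xi\le\frac{p_{\max}-p_{\min}}{8}=\frac{\sqrt{3}\,\sigma}{4}$ to conclude the derivative is negative. Your computations check out, and you usefully fill in the details the paper omits for brevity.
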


Corollary \ref{prop:pricespread} signifies the benefits of smart charging strategies. Specifically, when the deviation of the prices are higher around the same mean, the platform's optimal charging strategy results in even lower costs, as  the vehicles can {\it reap the benefits} of the lower prices that are more likely to be observed  while avoiding the higher spectrum of prices.

So far, we have assumed that vehicles can only decide to charge at the destination node of  their trips, and hence, we have not considered the possibility of rebalancing trips aiding the operator in decreasing the charging costs.
Unlike the network flow model, the random price model introduced so far does not allow the operator to perform rebalancing    to avoid charging at expensive nodes (as prices would be iid random after the rebalancing trip).  As such, we introduce a variation that allows the operator to perform rebalancing. Specifically, we
 assume that there exists a node $s$ outside of the network where the price for electricity $p_s$ is deterministic  and known to the operator, with $p_s \leq p_{\min}$. For ease of analysis, we consider the node to still be equidistant from all other nodes. This could represent a node equipped with renewable energy resources and storage devices, where cheap energy can be stored and later delivered to the vehicles.  Naturally, since  $p_s \leq p_{\min}$, any vehicle visiting $s$ charges to full. However, even though node $s$ provides cheap electricity, the rebalancing trips increase the operational costs (parameterized by $\beta$). Similar to the battery capacity-charging costs trade-off highlighted in Proposition \ref{prop:batterycostfunction}, a trade-off occurs between decreasing the charging costs and decreasing the operational costs due to higher number of trips. Given this modeling variation, our goal is to obtain the average charging cost and the additional rebalancing costs for traveling to node $s$ incurred by each vehicle under the optimal  strategy. In general it is not possible to write down   the average cost of charging and rebalancing per vehicle (denoted as $p_{avg}^r$)  in closed-form. 
 Nonetheless,  we generate approximate results.
\begin{proposition}\label{prop:rebalancingcost}
Let $p_{avg}$ be the average charging cost without rebalancing for a battery capacity $v_{max}$.
 For $v_{\max}\geq 3$, $p_{\min}\leq\ b \leq(2p_{avg}-p_{\min})$:
\begin{equation}
    \label{eq:optimalrebalancingcost}
    p_{avg}^r\approx b-\frac{(b-p_{\min})^2}{4(p_{avg}-p_{\min})},
\end{equation}
where $b=\frac{2}{v_{\max}-2}((1+\tau)\beta+p_s)+p_s$.
\end{proposition}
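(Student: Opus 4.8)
The plan is to split the argument into two parts that mirror the two ingredients of the statement: first pin down the effective per-unit cost $b$ of treating node $s$ as a dedicated charging depot, and then fold in the option of opportunistic charging at ordinary nodes whose realized prices happen to be low.

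For the first part I would model the use of node $s$ as a renewal-reward process whose cycle begins each time a vehicle arrives at $s$ with an empty battery and charges to full; since $p_s\le p_{\min}$, a visiting vehicle always charges to $v_{\max}$, so the cycle is well defined. Under the pure-depot policy (never charging locally) the vehicle drains down to state of charge $1$, spends one unit and $\tau\beta$ reaching $s$, refills all $v_{\max}$ units at electricity price $p_s$ (plus the per-unit charging operational cost $\beta$), and spends a second unit and $\tau\beta$ returning to the network. Of the $v_{\max}$ units purchased per cycle, exactly two are burned by the non-revenue round trip and $v_{\max}-2$ remain to power revenue trips; this is where the hypothesis $v_{\max}\ge 3$ enters, guaranteeing at least one useful unit. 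I would then amortize the excursion overhead --- the travel cost $2\tau\beta$ together with the full cost $2(p_s+\beta)$ of replacing the two wasted units --- over the $v_{\max}-2$ useful units, each of which, exactly as for $p_{avg}$ in Proposition \ref{prop:chargecost}, is charged only its electricity price $p_s$ (the per-unit charging operational term being accounted for separately in the decoupled operational problem). Collecting terms reproduces
\begin{equation*}
b=p_s+\frac{2\big((1+\tau)\beta+p_s\big)}{v_{\max}-2}.
\end{equation*}

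For the second part I would argue that, given this fallback cost $b$, the policy reduces to a single-threshold rule: at an ordinary node the vehicle charges locally at the observed price $p$ whenever $p<b$, and otherwise defers to a depot excursion at amortized cost $b$, so the per-unit cost incurred is $\min(p,b)$ and $p_{avg}^r$ is its expectation over the prices actually faced. The key approximation --- and the reason the statement carries $\approx$ --- is to replace the true stationary distribution of faced prices (whose balance relation, the analogue of \eqref{equilibrium} augmented with a rebalancing sink at $s$, admits no closed form) by a uniform law on $[p_{\min},2p_{avg}-p_{\min}]$, calibrated so that its lower endpoint is $p_{\min}$ and its mean is $p_{avg}=C_{v_{\max}}$, matching the average charging cost of Proposition \ref{prop:chargecost}. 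A direct evaluation of $\mathbb{E}[\min(p,b)]=b-\mathbb{E}[(b-p)^+]$ under this surrogate then gives
\begin{equation*}
p_{avg}^r\approx b-\frac{(b-p_{\min})^2}{4(p_{avg}-p_{\min})},
\end{equation*}
where the hypotheses $b\ge p_{\min}$ and $b\le 2p_{avg}-p_{\min}$ are exactly what place the threshold inside the support so that $\mathbb{E}[(b-p)^+]=(b-p_{\min})^2/\big(4(p_{avg}-p_{\min})\big)$.

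I expect the main obstacle to lie in justifying the reduction in the second part rather than the bookkeeping in the first. Two approximations must be made explicit: that the lumpy, state-dependent depot excursions can be collapsed into a per-unit threshold comparison against the amortized cost $b$, and that the induced stationary price law is well represented by the mean-matched uniform distribution on $[p_{\min},2p_{avg}-p_{\min}]$. Both trade exactness for tractability, which is precisely why the conclusion is stated asymptotically; by contrast, the renewal-reward derivation of $b$ is exact once the round-trip excursion structure and the $v_{\max}\ge 3$ regime are fixed, and I would present it first to anchor the computation.
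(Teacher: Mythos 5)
Your proposal is correct and lands on the paper's proof almost exactly, though the second half is packaged differently. The derivation of $b$ is the same as the paper's: two units and $2\tau\beta$ of travel are sunk per depot excursion, and the overhead $2\bigl((1+\tau)\beta+p_s\bigr)$ is amortized over the $v_{\max}-2$ useful units (this is where $v_{\max}\ge 3$ enters in both arguments). One small omission: the paper first proves a lemma that depot excursions are made only by vehicles at $v=1$ (they face the highest expected next-charge price), which you assume implicitly when you start the renewal cycle at state of charge $1$ without justification. For the second half, the paper does not use a surrogate price law; it parameterizes by the fraction $n$ of charging retained at ordinary nodes, linearly interpolates $C_{v_{\max}}(n)\approx p_{\min}+n(p_{avg}-p_{\min})$ between the endpoints $n=0$ and $n=1$, writes $p_{avg}^r\approx n\bigl(p_{\min}+n(p_{avg}-p_{\min})\bigr)+(1-n)b$, and minimizes the quadratic over $n$. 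Your route --- a per-unit threshold rule with cost $\mathbb{E}[\min(p,b)]$ under a uniform law on $[p_{\min},\,2p_{avg}-p_{\min}]$ calibrated to have mean $p_{avg}$ --- is algebraically identical: the fraction below threshold is exactly the paper's optimal $n^*=\frac{b-p_{\min}}{2(p_{avg}-p_{\min})}$, the conditional mean $\frac{p_{\min}+b}{2}$ equals $p_{\min}+n^*(p_{avg}-p_{\min})$, and your constraint $p_{\min}\le b\le 2p_{avg}-p_{\min}$ is precisely the paper's condition $n^*\in[0,1]$. The two framings thus encode the same single approximation (linear interpolation of the local marginal charging cost versus a mean-matched uniform surrogate); yours builds the optimality of the split into the threshold rule, while the paper exhibits it as an explicit first-order condition, which is arguably cleaner since it makes clear that $n$ is a decision variable being optimized rather than a distributional assumption.
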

The constraint on $v_{\max}$ is to provide the appropriate setting for rebalancing, considering the rebalancing trip itself consumes 2 units 
energy.
The term $\frac{2}{v_{\max}-2}((1+\tau)\beta+p_s)$  illustrates the overhead the rebalancing trip causes in order to charge for one unit of energy at node s, and the additive $p_s$ is the price of electricity for one unit of energy. We refer to $b$ as the \textit{rebalancing cost}. In addition, $p_{avg}^r$ is always less than or equal to $b$, with equality if the constraint on $b$ achieves lower bound (in which case all the vehicles are sent for charging at node $s$). Moreover, the average cost with rebalancing is less than $p_{avg}$. Observe that the cost function is monotonically increasing in $b$, and equals to $p_{avg}$ at the upper bound. 

The amount of overhead caused is directly proportional to $\frac{1}{v_{\max}-2}$. Hence,  the costs are decreasing as $v_{\max}$ increases, which indicates the importance of the battery capacity in rebalancing as well. 

To conclude, although the results are approximate, they provide interesting insights on the benefits of rebalancing trips for charging.  In Section \ref{numerical}, we will study the quality of the approximations used in Proposition \ref{prop:rebalancingcost} numerically.
\begin{figure*}[t]
    \centering
    \includegraphics[width=0.9\textwidth]{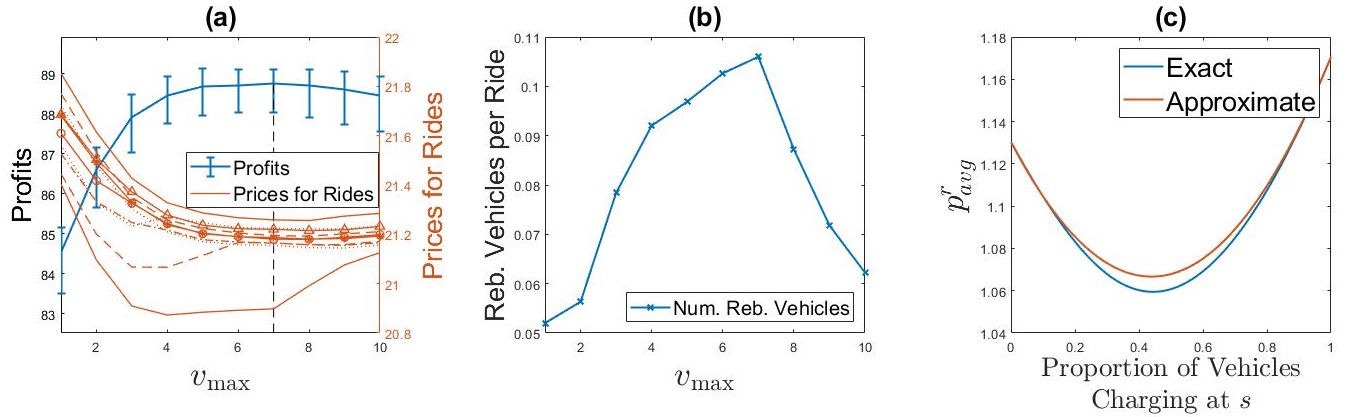}
    \caption{(a) Profits and prices for rides as a function of $v_{\max}$, (b) Number of rebalancing vehicles (normalized per trip) employed as a function of $v_{\max}$, (c) Comparison of approximate and exact solutions for average charging cost with rebalancing in the random price model for $v_{\max}=9$.}
    \label{fig:numerical}
\end{figure*}
\section{Numerical Results}\label{numerical}

In this section, we provide numerical results for the optimization problem in \eqref{eq:flowoptimization} and the approximation in Proposition \ref{prop:rebalancingcost}. For our analysis, we consider {\it one unit of energy} as described in the paper to be 10 kWh, and the operational battery cost per unit of energy, denoted as $\xi$, to be $\$0.003$ per period (normalized over 8 years)\cite{batterycost}. Moreover, each 6 minutes is considered as a discrete time unit \cite{chargeduration} (i.e., it takes the EVs 6 minutes to charge for 10kWh). The operational cost per period an EV is $\beta_0=\$0.1$ per period for a Tesla Model S\cite{teslacost}(normalized over 8 years). Price of electricity per unit of energy (10kWh) ranges from $\$0.8$ to $\$3$\cite{electricitycost}, $p_s=\$0.6$ and riders' maximum willingness to pay $\ell_{\max}=\$40$. The duration of trips is assumed to take $\tau=10$ time periods.

For the optimization problem in \eqref{eq:flowoptimization}, we use $m=10$ nodes with prices for electricity sampled from a uniform distribution in $[0.8,3]$. The problem was solved for 300 randomly created networks and the results were averaged. Figure \ref{fig:numerical}.a illustrates the profits and prices for rides originating from each node (The error bars indicate the maximum and the minimum profits out of 300 networks). Observe that the profits are increasing until $v_{max}=7$. However, since the marginal profits gained by increasing $v_{\max}$ by one are decreasing, investing in a battery capacity larger than $7$ causes the operational costs to dominate and hence the profits to decrease. Furthermore, the prices for rides display a decreasing behaviour as profits increase. However, it is interesting to note that the prices for rides at some nodes increase in the optimal solution as $v_{max}$ approaches its optimal value, because the globally optimal routing and charging strategy in the most general network flow model is different for each $v_{max}$, and hence might increase the costs of rides originating from certain nodes.

Figure \ref{fig:numerical}.b highlights the importance of rebalancing vehicles. As $v_{\max}$ increases up to its maximum of 7, the number of rebalancing vehicles employed per ride increases. Even though this increases the operational costs, their use for charging purposes  decreases the total costs and thus increases the profits. As $v_{\max}$ increases beyond optimum, growing operational costs result in less rebalancing vehicles employed.

Finally, in Figure \ref{fig:numerical}.c we plot the total average cost per vehicle versus the proportion of vehicles charging at node $s$. With optimal rebalancing, the average costs can be reduced substantially (from 1.13 to 1.06, around $6\%$). Observe that the approximate solution given by Proposition \ref{prop:rebalancingcost} and exact solution show very little error, which displays the fairness of our approximation.
\section{Conclusion}
In this paper, we presented the benefits of smart charging in an AMoD fleet of EVs controlled by a profit-maximizing platform operator. By first showing that the profits generated are highly dependent on the charging and operational costs the rides incur, we proposed a smart charging strategy in order to minimize these costs. We show that investing in a larger battery and utilizing more vehicles for rebalancing decrease the charging costs. However, due to the diminishing returns and increasing operational costs, there exists an optimal number of vehicles to operate for rebalancing and an optimal battery capacity to invest in. Aside from the numerical studies that support our claims, we provided  closed-form  expressions  for  the  average charging cost under optimal investment decisions, which we believe provide insights for design specifications and operating strategies that are crucial in an EV AMoD system.

\bibliographystyle{IEEEtran}
\bibliography{references}

\appendix
\subsection{Proof of Proposition \ref{prop:networkflow}}
\begin{proof}
Let $\nu_i^v$ be dual variables corresponding to the equilibrium flow constraints and $\lambda_{ij}$ be dual variables corresponding to the demand satisfaction constraints. 
Since the optimization problem   \eqref{eq:flowoptimization} is a convex quadratic maximization problem (given a with uniform $F(\cdot)$) and Slater's condition is satisfied, strong duality holds. We can write the dual problem as:
\begin{equation}
    \label{eq:dual}
    \begin{aligned}
    &\underset{\lambda_{ij},\nu_i^v}{\text{min}}\underset{\ell_i}{\text{max}}
    & &\sum_{i=1}^m\left(\theta_i(1-\frac{\ell_i}{\ell_{\max}})\left(\ell_i-\sum_{j=1}^m\lambda_{ij}\alpha_{ij}\right)\right)\\
    & \text{subject to}
    & & \nu_i^v-\nu_i^{v+1}-p_i-\beta\leq 0,\\
    & & & \lambda_{ij}+\nu_i^v-\nu_j^{v-1}-\tau\beta\leq 0,\\
    & & & \nu_i^v-\nu_j^{v-1}-\tau\beta\leq 0\quad \forall i,j,v.
    \end{aligned}
\end{equation}
For fixed $\lambda_{ij}$ and $\nu_i^v$, the inner maximization results in the optimal prices:
\begin{equation}
\label{eq:optimalpricesproof}
    \ell_i^*=\frac{\ell_{\max}+\sum_{j=1}^m\lambda_{ij}\alpha_{ij}}{2}.
\end{equation}
By strong duality, the optimal primal solution satisfies the dual solution with optimal dual variables $\lambda_{ij}^*$, ${\nu_i^v}^*$, which completes the first part of the proposition. The dual problem with optimal prices in \eqref{eq:optimalpricesproof} can be written as:
\begin{equation}
    \label{eq:dualwithoptimalprices}
    \begin{aligned}
    &\underset{\lambda_{ij},\nu_i^v}{\text{min}}
    & &\sum_{i=1}^m\frac{\theta_i}{\ell_{\max}}\left(\frac{\ell_{\max}-\sum_{j=1}^m\lambda_{ij}\alpha_{ij}}{2}\right)^2\\
    & \text{subject to}
    & & \nu_i^v-\nu_i^{v+1}-p_i-\beta\leq 0,\\
    & & & \lambda_{ij}+\nu_i^v-\nu_j^{v-1}-\tau\beta\leq 0,\\
    & & & \nu_i^v-\nu_j^{v-1}-\tau\beta\leq 0\quad \forall i,j,v.
    \end{aligned}
\end{equation}
The objective function in \eqref{eq:dualwithoptimalprices} with optimal dual variables, along with \eqref{eq:optimalprices} suggests:
\begin{equation*}
    P=\sum_{i=1}^m\frac{\theta_i}{\ell_{\max}}(\ell_{\max}-\ell_i^*)^2,
\end{equation*}
where profits $P$ is the value of the objective function of both optimal and dual problems.
To upper bound $\ell_i^*$, we upper bound $\lambda_{ij}$ using the constraints in \eqref{eq:dualwithoptimalprices}. The proof is straightforward through algebraic calculations.
\end{proof}
\subsection{Proof of Proposition \ref{prop:chargecost}}
For convenience and ease of notation in our calculations, we are going to assume $C_0=p_{max}$. This does not violate our model, as $d(0,p)$ is zero for $p>p_{\max}$.
To prove Proposition \ref{prop:chargecost}, we first state the following lemmas:
\begin{lemma}
\label{lem:densities}
For $v=0$, $d(0,p)$ is constant in $[p_{\min},p_{\max}]$. For any other $v$, $d(v,p)$ is a constant denoted as $d_v^1$ in the interval $[p_{\min},C_{v-1})$, and another constant $d_v^2$ in $[C_{v-1},p_{\max}]$. 
\end{lemma}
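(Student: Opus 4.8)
The plan is to prove the claim by induction on the battery level $v$, reading the dependence on $p$ straight off the stationary balance condition \eqref{equilibrium}. The one structural fact I need at the outset is that, under Assumption \ref{ass:uniformprice}, the price density $f_P(p)=\frac{1}{p_{\max}-p_{\min}}$ is constant on its support. Consequently the second summand of \eqref{equilibrium}, namely $f_P(p)\int_{C_{v+1}}^{p_{\max}}d(v+1,p)\,dp$, does not depend on the outer variable $p$ at all: it is a constant, which I will call $K_v$, no matter what shape $d(v+1,\cdot)$ has. This observation decouples the structural question from the otherwise two-sided coupling in \eqref{equilibrium}, so the induction in $v$ can be carried out forward without knowing the higher levels explicitly.

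For the base case $v=0$ I would note that the first summand $d(v-1,p)\,u(C_{v-1}-p)$ is absent because there is no battery level $-1$; hence $d(0,p)=K_0$ is constant on $[p_{\min},p_{\max}]$. For the inductive step at level $v\ge 1$, assume $d(v-1,\cdot)$ has the asserted two-piece form with breakpoint at $C_{v-2}$, equal to the constant $d_{v-1}^1$ on $[p_{\min},C_{v-2})$ (for $v=1$ this reduces to $d(0,\cdot)$ being a single constant on the whole support, which the base case supplies). The step function $u(C_{v-1}-p)$ cuts the support at $C_{v-1}$: the first summand of \eqref{equilibrium} equals $d(v-1,p)$ for $p<C_{v-1}$ and $0$ for $p\ge C_{v-1}$. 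Adding the constant $K_v$ then yields $d(v,p)=K_v=:d_v^2$ on $[C_{v-1},p_{\max}]$ and $d(v,p)=d(v-1,p)+K_v$ on $[p_{\min},C_{v-1})$, so the single breakpoint of $d(v,\cdot)$ sits exactly at $C_{v-1}$, as required.

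The step that needs care — and the main obstacle — is to guarantee that $d(v-1,\cdot)$ is a \emph{single} constant throughout all of $[p_{\min},C_{v-1})$, rather than inheriting the breakpoint $C_{v-2}$ of the inductive hypothesis somewhere inside that interval. This is exactly where the ordering of the thresholds enters: a short computation from \eqref{Cvuniform} gives $C_{v-1}-C_v=\frac{(C_{v-1}-p_{\min})^2}{2(p_{\max}-p_{\min})}\ge 0$, so the sequence $\{C_v\}$ is nonincreasing and in particular $C_{v-1}\le C_{v-2}$. Hence $[p_{\min},C_{v-1})\subseteq[p_{\min},C_{v-2})$, on which the inductive hypothesis forces $d(v-1,\cdot)\equiv d_{v-1}^1$. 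Substituting, $d(v,p)=d_{v-1}^1+K_v=:d_v^1$ is constant on $[p_{\min},C_{v-1})$, completing the induction. I would close by noting the consistency of the degenerate first level: since $C_0=p_{\max}$, the upper piece $[C_0,p_{\max}]$ collapses to a single point and $d(1,\cdot)$ is effectively one constant, matching the convention $C_0=p_{\max}$ fixed at the start of the proof.
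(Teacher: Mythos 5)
Your proof is correct and is exactly the argument the paper has in mind: the paper omits the proof entirely, stating only that the lemma ``simply follows from the balance condition,'' and your forward induction --- using the constancy of $f_P$ under Assumption \ref{ass:uniformprice} to turn the $d(v+1,\cdot)$ term into a $p$-independent constant, and the monotonicity $C_{v-1}\le C_{v-2}$ (verified via the same identity the paper uses in Proposition \ref{prop:batterycostfunction}) to keep the single breakpoint at $C_{v-1}$ --- supplies precisely the details the paper skips. No gaps.
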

 
The lemma simply follows from the balance condition \eqref{equilibrium}. We exclude the proof for brevity.

 When we have the proposed characteristic of the stationary distribution, the recursive relation between $d(v,p)$ and $d_(v-1,p)$ can be written as follows based on \eqref{equilibrium}:\\
\begin{equation}
    \label{densitymatrix}
    \renewcommand*{\arraystretch}{2}
\begin{aligned}
   & \left[ \begin{array}{c} d_v^1 \\ d_v^2 \end{array} \right] = \begin{bmatrix} \frac{p_{\max}-C_{v-1}}{p_{\max}-C_v} & \frac{p_{\max}-p_{\min}}{p_{\max}-C_v} \\ \frac{C_v-C_{v-1}}{p_{\max}-C_v} & \frac{p_{\max}-p_{\min}}{p_{\max}-C_v} \end{bmatrix} \left[ \begin{array}{c} d_{v-1}^1 \\ d_{v-1}^2 \end{array} \right],\\
&    \forall v\geq 1\;\textnormal{and}\; d_0^1=d_0^2=d_0.
\end{aligned}
\end{equation}
To get the average charging cost, we are solely interested in the values of the $d_v^1$'s, because charging is done in the range $[p_{\min},C_v)$. Another way of writing a relation between the stationary distributions is as follows:
\begin{lemma}
\label{recursive}
$d_v^1=\frac{p_{\max}-p_{\min}}{p_{\max}-C_{v}} d_{v-1}^1$.
\end{lemma}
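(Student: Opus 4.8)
The plan is to work directly from the matrix recursion \eqref{densitymatrix}, which already expresses $(d_v^1,d_v^2)$ in terms of $(d_{v-1}^1,d_{v-1}^2)$, and to collapse this two-dimensional recursion to the scalar claim of the lemma by first pinning down the ratio $d_v^2/d_v^1$. The starting observation is that subtracting the second row of \eqref{densitymatrix} from the first row cancels the two $\frac{p_{\max}-p_{\min}}{p_{\max}-C_v}d_{v-1}^2$ contributions, while the remaining coefficient of $d_{v-1}^1$ simplifies via $\frac{(p_{\max}-C_{v-1})-(C_v-C_{v-1})}{p_{\max}-C_v}=1$. This produces the clean workhorse identity $d_v^1-d_v^2=d_{v-1}^1$, which I would use throughout.

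With this identity available, I would prove the lemma by induction on $v$, carrying along the auxiliary claim $d_v^2=\frac{C_v-p_{\min}}{p_{\max}-p_{\min}}\,d_v^1$. The base case is $v=0$: using the conventions $C_0=p_{\max}$ and $d_0^1=d_0^2=d_0$ fixed at the start of the proof, the auxiliary claim reduces to $d_0=\frac{p_{\max}-p_{\min}}{p_{\max}-p_{\min}}d_0=d_0$, which holds trivially.

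For the inductive step I would first feed the auxiliary claim at level $v-1$ into the first row of \eqref{densitymatrix}: substituting $d_{v-1}^2=\frac{C_{v-1}-p_{\min}}{p_{\max}-p_{\min}}d_{v-1}^1$ makes the numerator $(p_{\max}-C_{v-1})d_{v-1}^1+(C_{v-1}-p_{\min})d_{v-1}^1$ collapse to $(p_{\max}-p_{\min})d_{v-1}^1$, which yields precisely $d_v^1=\frac{p_{\max}-p_{\min}}{p_{\max}-C_v}d_{v-1}^1$, i.e.\ the lemma at level $v$. I would then re-use the identity in the form $d_v^2=d_v^1-d_{v-1}^1$ together with the rearrangement $d_{v-1}^1=\frac{p_{\max}-C_v}{p_{\max}-p_{\min}}d_v^1$ just obtained, giving $d_v^2=\frac{C_v-p_{\min}}{p_{\max}-p_{\min}}d_v^1$ and thereby closing the induction on the auxiliary claim as well.

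The only genuine subtlety — the part I would flag as the main obstacle — is recognizing that the scalar recursion for $d_v^1$ is not self-contained: the first row of \eqref{densitymatrix} mixes in $d_{v-1}^2$, so the lemma cannot be read off without simultaneously controlling the ratio $d_v^2/d_v^1$. Bundling the auxiliary claim into the same induction is what resolves this, and everything else is routine algebra hinging on the cancellation $(p_{\max}-C_{v-1})+(C_{v-1}-p_{\min})=p_{\max}-p_{\min}$. I would also verify that the $v\ge 1$ indexing of the lemma is consistent with a base case at $v=0$, and that the convention $C_0=p_{\max}$ is invoked only in that initialization.
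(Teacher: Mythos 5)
Your proof is correct and follows essentially the same route as the paper's: both hinge on the row-subtraction identity $d_v^1-d_v^2=d_{v-1}^1$ extracted from \eqref{densitymatrix}, followed by induction. Your auxiliary invariant $d_v^2=\frac{C_v-p_{\min}}{p_{\max}-p_{\min}}d_v^1$ is algebraically equivalent to the paper's induction hypothesis (the lemma at level $v-1$ combined with that same identity), so the two arguments differ only in packaging, with yours handling the base case slightly more cleanly via the convention $C_0=p_{\max}$.
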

\begin{proof}
Equation \eqref{densitymatrix} yields $d_{v-1}^2=d_{v-1}^1-d_{v-2}^1$. Hence, by substituting $d_{v-1}^2$ with this, we get:
\begin{multline}
\label{recursivedensity}
    d_v^1=\frac{1}{p_{\max}-C_v}[(p_{\max}-C_{v-1})d_{v-1}^1\\+(p_{\max}-p_{\min})d_{v-1}^1-(p_{\max}-p_{\min})d_{v-2}^1].
\end{multline}
 By induction, if Lemma \ref{recursive} holds for $d_{v-1}^1$, then it holds for $d_v^1$, since the first and the last terms inside square brackets cancel each other. For $v=1$, the lemma holds since \eqref{recursivedensity} gives $(p_{\max}-C_1)d_1^1=(p_{\max}-p_{\min})d_0^1$, because $d_0^1=d_0^2$.
\end{proof}
Finally, we can prove Proposition \ref{prop:chargecost}:
\begin{proof}
Using Lemma \ref{recursive}, for any $v$, $d_v^1$ is given by:
\begin{equation}
\label{dv1}
    d_v^1=d_0 \prod_{i=1}^v\frac{p_{\max}-p_{\min}}{p_{\max}-C_i}.
\end{equation}
This allows us to write $d_v^1$'s and $d_v^2$'s in terms of $d_0$. Moreover, since $d(v,p)$ is the distribution of a vehicle, integration over $p$ and summation over $v$ should be equal to 1. This normalization yields:
\begin{equation}
\label{d01}
    d_0^1=d_0^2=d_0=\frac{\prod_{i=1}^{v_{\max}-1}(p_{\max}-C_i)}{(1+\tau)(p_{\max}-p_{\min})^{v_{\max}}}.
\end{equation}
The final step to get the average charging cost per vehicle is to use \eqref{eq:avgcostrand} for uniformly distributed prices. The equation takes the following form:
\begin{equation}
    \label{averagecostuniform}
    p_{avg}=(1+\tau)\sum_{v=0}^{v_{\max}-1}d_v^1\frac{C_v^2-p_{\min}^2}{2}.
\end{equation}
Substituting \eqref{dv1} and \eqref{d01} into \eqref{averagecostuniform}:
\begin{equation}
\label{uniformchargecostcomplete}
p_{avg}=\sum_{v=0}^{v_{max-1}}\frac{C_v^2-p_{\min}^2}{2}\cdot\frac{\prod_{i=v+1}^{v_{\max}-1}(p_{\max}-C_i)}{(p_{\max}-p_{\min})^{v_{\max}-v}}.
\end{equation}
If we write down \eqref{Cv} explicitly for uniform distribution and $v=v_{\max}$, we get:
\begin{equation*}
\label{Cvmaxexplicit}
    \begin{aligned}
       C_{v_{\max}}
       &=\frac{C_{v_{\max}-1}^2-p_{\min}^2}{2}\cdot\frac{1}{p_{\max}-p_{\min}}\\
       &\;\;\;\;+\frac{p_{\max}-C_{v_{\max}-1}}{p_{\max}-p_{\min}}\cdot \left[C_{v_{\max}-1}\right].
    \end{aligned}
\end{equation*}
Next, by writing $C_{v_{\max}-1}$ in curly brackets explicitly in terms of $C_{v_{\max}-2}$, and then further applying this method until $C_0$, we get the same expression as in \eqref{uniformchargecostcomplete}.
\end{proof}
\subsection{Proof of Corollary \ref{prop:pricespread}}
The proof simply follows from writing $p_{avg}$ in \eqref{batterycostfunction} in terms of $\eta$ and $\sigma$, taking derivative with respective to $\sigma$, and applying the upper bound on $\xi$. We omit the proof for brevity.
\subsection{Proof of Proposition \ref{prop:rebalancingcost}}
First, we need to prove the following lemma:
\begin{lemma}
The rebalancing trips are only made by vehicles with battery energy level $v=1$.
\end{lemma}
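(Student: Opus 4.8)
The plan is to cast the decoupled charging subproblem with the extra node $s$ as a Markov decision process on the state $(v,p)$ and to show that, among the three actions available at any state with $1\le v\le v_{\max}-1$ — charge one unit in the network at price $p$, travel to a random node, or rebalance to $s$ and return full — the rebalancing action is never strictly optimal unless $v=1$. First I would pin down the cost and the transition of a single rebalancing episode triggered at level $v$: the vehicle spends $\tau$ periods reaching $s$ (burning one unit), charges from $v-1$ up to $v_{\max}$ at the known price $p_s$, and spends another $\tau$ periods returning (burning a second unit), so it re-enters the network at level $v_{\max}-1$, having supplied $v_{\max}-1-v$ \emph{net} units to network operation. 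Expressed on the same footing as the network price $p$ (i.e.\ net of the per-unit charging cost $\beta$ that is common to both sourcing options), the effective price of rebalanced energy triggered at level $v$ is
\[
b_v \;=\; p_s+\frac{2\bigl(p_s+(1+\tau)\beta\bigr)}{v_{\max}-1-v},
\]
the fixed overhead $2\bigl(p_s+(1+\tau)\beta\bigr)$ of the two in-transit units being amortized over the net units replenished; since $p_s\le p_{\min}$ the vehicle always fills to full at $s$, so the episode is a single composite action, and at $v=1$ this reduces exactly to the quantity $b$ of Proposition~\ref{prop:rebalancingcost}.

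The key structural fact is that $b_v$ is \emph{strictly increasing in} $v$: each rebalancing episode carries the same fixed overhead, and the only way to dilute it is to refill as many units as possible per episode, i.e.\ to trigger it at the lowest admissible charge. A vehicle at $v=0$ cannot travel at all, so the lowest feasible trigger level is $v=1$, and $b_1<b_v$ for every $v\ge 2$. It then remains to show that an optimal policy never benefits from paying the higher rate $b_v$ at some $v\ge 2$. Here I would use a deferral argument: a vehicle with $v\ge 2$ is never \emph{forced} to charge, since it can always travel and arrive at level $v-1\ge 1$ with a freshly drawn price, so any rebalancing done at level $v\ge 2$ can be postponed by letting the vehicle run down to $v=1$ and rebalancing there. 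Over a full discharge–refill cycle the total energy consumed, and hence the total energy that must be replenished, is unchanged; only the rate at which the rebalanced portion is purchased drops from $b_v$ to the strictly cheaper $b_1$. Consequently the modified policy costs no more than the original, and rebalancing can be confined to $v=1$; the reason $v=1$ is singled out rather than excluded as well is that it is the unique level from which a plain trip would land the vehicle in the forced-charge state $v=0$, so it is precisely where avoiding the network is worth the overhead.

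The main obstacle is making this deferral step fully rigorous rather than heuristic, because between levels $v$ and $1$ the vehicle observes a sequence of random prices and may \emph{opportunistically charge cheaply in the network} at an intermediate node, so the deferral is not a clean one-for-one interchange of actions. I would handle this with a renewal–reward formulation: take the instants at which the vehicle returns full from $s$ as regeneration points, express the long-run average charging-plus-rebalancing cost as (expected cycle cost)$/$(expected cycle length), and show by a pathwise coupling that replacing a level-$v$ trigger with a level-$1$ trigger weakly lowers the expected cycle cost while leaving the replenishment requirement intact. The subtlety to watch is that postponing the trigger must not waste energy — one must correctly charge the units burned while running down — which is exactly what the amortization encoded in $b_v$ accounts for, so the strict monotonicity of $b_v$ in $v$ is ultimately what drives the inequality through and yields the claim.
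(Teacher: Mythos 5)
Your proposal is correct in its essentials but reaches the conclusion along a genuinely different axis than the paper. The paper's (very brief) proof works on the \emph{benefit} side: since the thresholds $C_v$ are decreasing in $v$, a vehicle leaving a node at $v=1$ faces the highest expected price $C_1=\eta$ for its next network charge, so diverting such vehicles to $s$ yields the largest marginal saving; the paper then parametrizes the fraction $\gamma$ of $v=1$ vehicles sent to $s$, observes via \eqref{eq:gammaCv} and the recursion \eqref{Cvuniform} that $\gamma=1$ already collapses every threshold to $C_v=p_{\min}$, and concludes that the supply of $v=1$ vehicles is exhausted before any higher-level vehicle could profitably be sent. You instead work on the \emph{cost} side: you price the rebalanced energy itself, showing that the fixed excursion overhead $2\bigl(p_s+(1+\tau)\beta\bigr)$ is amortized over $v_{\max}-1-v$ net units so that the effective rate $b_v$ is strictly increasing in $v$ (your accounting is consistent with the paper's, and $b_1$ recovers exactly the $b$ of Proposition \ref{prop:rebalancingcost}), and then use a deferral/exchange argument to push every trigger down to the lowest feasible level $v=1$. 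Your route makes the economics explicit and explains quantitatively why $v=1$ is singled out; the paper's route is shorter and feeds directly into the $\gamma$-parametrized recursion used in the rest of the proposition's proof. The step you flag yourself --- making the deferral rigorous when the vehicle may opportunistically charge in the network on the way down from $v$ to $1$ --- is a genuine gap in your write-up, but the renewal--reward/coupling plan you sketch is the right tool for closing it, and the paper's own two-sentence proof is no more rigorous at precisely the analogous point (it asserts, rather than proves, the priority ordering of levels).
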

\begin{proof}
It is optimal for the platform operator to first send vehicles with state $v=1$ for rebalancing, as they would end up paying the highest expected price for electricity at $v=0$ state. Let $\gamma$ be the portion of the vehicles with state $v=1$ sent for rebalancing. The expected cost of the next charge is modified as:
\begin{equation}\label{eq:gammaCv}
C_1(\gamma)=(1-\gamma)\eta+\gamma C_{v_{\max}}(\gamma)
\end{equation}
When $\gamma=1$, $C_v=p_{\min}$ for all $v$ as the recursion in \eqref{Cvuniform} suggests. Hence, no vehicle charges at regular nodes.
\end{proof}
In general, we can not characterize $C_{v_{\max}}(\gamma)$ in closed-form. Instead, we consider the proportion of the charging vehicles remaining at regular nodes (excluding the overhead caused by rebalancing), denoted by $n$, and approximate the behaviour of $C_{v_{\max}}$ linearly with n:
\begin{equation}
    \label{eq:cvmaxapproximate}
    C_{v_{\max}}(n)\approx p_{\min}+n(p_{avg}-p_{\min}).
\end{equation}
Note that $\gamma=1$ corresponds to $n=0$ and $\gamma=0$ corresponds to $n=1$, hence the endpoints are satisfied.
The total rebalancing and charging costs the platform incurs:
\begin{equation}    \label{eq:costapproximate}
    p_{avg}^r\approx n(p_{\min}+n(p_{avg}-p_{\min}))+(1-n)b,
\end{equation}
where $b=\frac{2}{v_{\max}-2}((1+\tau)\beta+p_s)+p_s$. Since 2 units of energy is wasted for the trips caused by the rebalancing vehicles, $\frac{2}{v_{\max}-2}((1+\tau)\beta+p_s)$ corresponds to the overhead induced by rebalancing. Specifically, to provide $v_{\max}-2$ units of energy at node $s$, an excess cost of $2p_s$ to charge for the rebalancing trips and $(2+2\tau)\beta$ to operate the charging and traveling vehicles has to be paid.

The total average cost is a weighted average of charging costs at regular nodes and cost of rebalancing. The minimum of $p_{avg}^r$ follows from minimizing \eqref{eq:cvmaxapproximate} with respect to $n$.
\end{document}